\theoremstyle{plain}\newtheorem{theorem}{Theorem}
\theoremstyle{definition}
\theoremstyle{definition}
\theoremstyle{plain}\newtheorem{Pp}[theorem]{Proposition}
\theoremstyle{plain}
\theoremstyle{plain}
\theoremstyle{plain}
\newtheorem{defi}{Definition}
\begin{document}

\preprint{APS/123-QED}
\title{Exact distributed quantum algorithm for generalized Simon's problem}

\author{Hao Li, Daowen Qiu$^\dag$}
\email{$\dag$ issqdw@mail.sysu.edu.cn (D.W. Qiu, Corresponding author's address)}
\affiliation{
 Institute of Quantum Computing and Computer Theory, School of Computer Science and Engineering, Sun Yat-sen University, Guangzhou 510006, China; \\
 The Guangdong Key Laboratory of Information Security Technology, Sun Yat-sen University, 510006, China;\\
QUDOOR Technologies Inc., Zhuhai, China}

\author{Le Luo}
\affiliation{
	School of Physics and Astronomy, Sun Yat-sen University, 519082 Zhuhai, China;\\
QUDOOR Technologies Inc., Zhuhai, China
}
\author{Paulo Mateus}
\affiliation{
	Instituto de Telecomunica\c{c}\~{o}es, Departamento de Matem\'{a}tica,
	Instituto Superior T\'{e}cnico,  Av. Rovisco Pais 1049-001  Lisbon, Portugal
}
\date{\today}

\begin{abstract}
Simon’s problem is one of the most important problems demonstrating the power of quantum algorithms, as it greatly inspired the  proposal of Shor's algorithm. The generalized Simon's problem is a natural extension of Simon’s problem, and also a special hidden subgroup problem: Given a function $f:{\{0, 1\}}^n \to {\{0, 1\}}^m$, 
			with the property that for any $x, y\in {\{0, 1\}}^n$, there is some unknown hidden subgroup
			$S\leq\mathbb{Z}_2^n$ such that $f(x)=f(y)$ iff $x \oplus y\in S$, where $|S|=2^k$ for some $0\leq k\leq n$ $(m\geq n-k)$. 
			The goal of generalized Simon’s problem is to find the hidden subgroup $S$. In this paper, we present two key contributions. Firstly, we characterize the structure of  the generalized Simon's problem in distributed scenario and introduce a corresponding distributed quantum algorithm. Secondly, we refine the algorithm to ensure exactness due to the application of  quantum amplitude amplification technique. Our algorithm offers exponential acceleration compared to the distributed classical algorithm. When contrasted with the centralized quantum algorithm for  the generalized Simon's problem, our algorithm's oracle requires fewer qubits, thus making it easier to be physically implemented.  Particularly, the  exact distributed  quantum algorithm we develop for  the generalized Simon's problem outperforms the best previously proposed distributed quantum algorithm for Simon's problem in terms of generalizability and exactness.
\end{abstract}

\pacs{Valid PACS appear here}
\maketitle

\section{INTRODUCTION}{\label{Sec1}}

Quantum computing \cite{nielsen_quantum_2010} has been proved to have great potential in factorizing  large numbers  \cite{shor_polynomial-time_1997},  searching unordered database \cite{grover_fast_1996} and solving linear systems of equations \cite{HHL_2009}. However,  large-scale universal quantum computers have not yet been realized due to the limitations of current physical devices. At present, quantum technology has been entered to the Noisy
Intermediate-Scale Quantum (NISQ) era \cite{preskill_quantum_2018}, which makes it possible to implement quantum algorithms on middle-scale circuits.

Distributed quantum computing is a novel computing architecture, which combines quantum computing with distributed computing \cite{goos_distributed_2003,beals_efficient_2013,Qiu2017DQC,
caleffi_quantum_2018,avron_quantum_2021,
Qiu22,Tan2022DQCSimon,Xiao2023DQAShor,Hao2023DDJ,Xiao2023DQAkShor}. In  distributed quantum computing architecture, multiple quantum computing nodes communicate with each other  and cooperate to complete computing tasks. Compared with centralized quantum computing, the size and depth of circuit can be reduced by using distributed quantum computing, which is beneficial to improve the performance of  circuit against noise. 

Simon's problem is one of the most important problems in  quantum computing \cite{simon_power_1997}. For solving Simon's problem, quantum algorithms have the advantage of exponential acceleration over  classical algorithms \cite{cai_optimal_2018}. Remarkably,   Simon's algorithm greatly inspired the  proposal of Shor's algorithm \cite{shor_polynomial-time_1997}. Furthermore, the generalized Simon's problem is a natural extension of Simon’s problem,  and an instance of the hidden subgroup problem \cite{nielsen_quantum_2010,kaye_introduction_2007}.

Tan, Xiao, and Qiu et al. \cite{Tan2022DQCSimon}  proposed a distributed quantum algorithm for Simon's problem, but they left it open as to whether an exact distributed version exists. In the centralized case, Cai and Qiu \cite{cai_optimal_2018} utilized quantum amplitude amplification to address the issue of exactness for Simon's problem. Their approach has served as inspiration for our work. In this paper, we contribute in two new ways. Firstly, we characterize the structure of  the generalized Simon's problem in  distributed scenario and leverage this understanding to design a corresponding distributed quantum algorithm. Secondly, we incorporate quantum amplitude amplification \cite{BHMT02} to ensure the algorithm's exactness.

The remainder of this paper is organized as follows. In Sec. \ref{Sec2}, we present some notations related to group theory, and recall the generalized Simon's problem. 
In Sec. \ref{Sec3}, we characterize the structure of  the generalized Simon's problem in  distributed scenario.  Then,
in Sec. \ref{Sec4} we describe a  distributed  quantum algorithm for  the generalized Simon's problem and give the corresponding analytical procedure.  Furthermore,
 in Sec. \ref{Sec5} we introduce  quantum amplitude amplification technique, and with this technique,
 we in Sec. \ref{Sec6} design an   exact  distributed  quantum algorithm  for  the generalized Simon's problem and prove its correctness. In addition, in Sec. \ref{Sec7}, we compare 
our algorithm with other algorithms. Finally,  we  conclude with a summary in Sec. \ref{Sec8}.
\section{PRELIMINARIES}\label{Sec2}

In this section, we present some notations related to group theory, and recall the generalized Simon's problem.

\subsection{Notations}\label{Notations}
It is known that the generalized Simon's problem is an instance of  the hidden subgroup problem. Below,  we present some of the notations related to group theory.

For $x, y\in\mathbb{Z}_2^n$ with $x=(x_1,\ldots,x_n)$ and $y=(y_1,\ldots,y_n)$, we define
\begin{align}
x+y\coloneqq((x_1+y_1)\bmod2,\ldots, (x_n+y_n)\bmod2).
\end{align}
\begin{align}
x\cdot y\coloneqq(x_1\cdot y_1+\cdots +x_n\cdot y_n)\bmod2.
\end{align}

For any subset $X\subseteq \mathbb{Z}_2^n$, $\langle X\rangle$ denotes the subgroup generated by $X$, i.e.,
\begin{equation}
\langle X\rangle\coloneqq\left\{\sum_{i=1}^k\alpha_i x_{i}|x_{i}\in X, \alpha_i\in\{0,1\}\right\}.
\end{equation}

The set $X$ is linearly independent if $\langle X\rangle\neq\langle Y\rangle$ for any proper subset $Y$ of $X$. Notice that the cardinality $|\langle X\rangle|$ is $2^{|X|}$ if
$X$ is linearly independent.

Let $G$ denote the group $(\{0,1\},\oplus)$, the basis of $G$ is a maximal linearly independent subset of $G$. The cardinality of the basis of $G$ is called its $rank$, denoted by $rank(G)$. If $H$ is a subgroup of $G$, then we denote $H\leq G$.
For $H\leq G$, let
\begin{equation}
H^{\perp}\coloneqq\{g\in G | g\cdot h = 0, \forall h\in H\}.
\end{equation}

Notice that $(H^{\perp})^{\perp}= H$
and $|\langle H\rangle^{\perp}| = 2^{n-|H|}$ if $H$ is linearly independent.

\subsection{The generalized Simon's problem}

The generalized Simon's problem is a special kind of the hidden subgroup problem \cite{kaye_introduction_2007}, which can be described as follows. Consider a function $f:\{0,1\}^n \rightarrow \{0,1\}^m$, where we  promise that  for any $x, y\in {\{0, 1\}}^n$, there is a hidden subgroup $S\leq\mathbb{Z}_2^n$, such that $f(x) = f(y)$ if and only if $x \oplus y \in S$, where $|S|=2^k$ for some $0\leq k\leq n$ $(m\geq n-k)$. For the specific case where $k=1$,  the generalized Simon's problem precisely aligns with Simon's problem.

Denote the basis of $S$ as $\{s_i| s_i\in\{0,1\}^n, 1\leq i\leq k \}$, then we have
\begin{equation}
S=\left\{\sum_{i=1}^k\alpha_i s_{i}\Bigg|s_{i}\in\{0,1\}^{n}, \alpha_i\in\{0,1\}\right\}.
\end{equation}

Suppose we have an oracle that can query the value of function $f$. For any $x \in \{0,1\}^n$ and any $b \in \{0,1\}^m$, if we input $|x\rangle|b\rangle$ into the oracle, then $|x\rangle|b \oplus f(x)\rangle$ is obtained.The goal of the generalized Simon's problem is to find   the hidden subgroup $S$ by performing the minimum number of queries to function $f$. 

The quantum query complexity of  the generalized Simon's problem is $\Theta(n-k)$. The lower bound on the classical (deterministic or randomized) query complexity of  the generalized Simon's problem is $\Omega\left(\max\left\{k,\sqrt{ 2^{n-k}}\right\}\right)$, and its upper bound on the classical query complexity is $O\left(\max\left\{k,\sqrt{k\cdot 2^{n-k}}\right\}\right)$ \cite{WuGSP, KunGSP}.

\section{the generalized Simon's problem in the distributed scenario }\label{Sec3}

In the following, we describe the generalized Simon's problem in distributed scenario and characterize its structure.

The function $f$ corresponding to the generalized Simon's problem  is divided into $2^t$ subfunctions $f_w:\{0,1\}^{n-t}\rightarrow\{0,1\}^m$ as follows.
Let
\begin{equation}\label{General method of function decomposition}
f_w(u)=f(uw),
\end{equation}
where $ u \in \{0,1\}^{n-t}$, $w\in\{0,1\}^t$.

Suppose there are $2^t$ people, each of whom has an oracle $O_{f_w}$ that can query all $f_w(u)=f(uw)$ for any $u \in \{0,1\}^{n-t}$, $w \in \{0,1\}^t$, where $O_{f_w}$ are defined as
\begin{equation}\label{Ofw}
O_{f_w}\ket{u}\ket{b}=\ket{u}\ket{b\oplus f_w(u)},
\end{equation}
where  $u\in\{0,1\}^{n-t}$, $w\in \{0,1\}^t$ and $b\in\{0,1\}^m$.
 
Each person can access $2^{n-t}$ values of $f$.
They need to find the hidden subgroup $S$ by querying their own oracle and communicating with each other as few times as possible.

Below, we further introduce some notations related to the function $f$ corresponding to the generalized Simon's problem. We anticipate that the reader is already familiar with the concept of multisets.

\begin{defi}
	For any $u \in \{0,1\}^{n-t}$, let $G(u)$ denote the multiset  $\{f_w(u)|w \in \{0,1\}^t\}$.
\end{defi}

\begin{defi}
For any $u \in \{0,1\}^{n-t}$, let $N(u,z)=\{w\in\{0,1\}^t|z\in G(u), f_w(u)=z\}$.
\end{defi}

\begin{defi}
  For any $u \in \{0,1\}^{n-t}$, let $S(u)$ represent a string of length $2^tm$ by concatenating all strings $f_w(u)$ $(w\in\{0,1\}^t)$  according to lexicographical order, that is,
  \begin{align}
  S(u)=f_{w_0}(u)f_{w_1}(u)\cdots f_{w_{2^t-1}}(u),
  \end{align}
  where $f_{w_0}(u)\le f_{w_1}(u)\le \ldots \le f_{w_{2^t-1}}(u)\in \{0,1\}^m$, with $w_i\in\{0,1\}^t$ $(0\leq i\leq 2^t-1)$, where $w_i\neq w_j$ for any $i\neq j$ and $\le$ denotes the lexicographical order.
\end{defi}

Let $S$ be the hidden subgroup to be found, and
 denote the basis  of $S$ as 
$\{s_{il}s_{ir}|s_i=s_{il}s_{ir},s_{i}\in\{0,1\}^{n}, s_{il}\in\{0,1\}^{n-t},s_{ir}\in\{0,1\}^{t}, 1\leq i\leq k\}$.

Let
\begin{equation}
S_l=\left\{\sum_{i=1}^k\alpha_i s_{il}\Bigg|s_{il}\in\{0,1\}^{n-t}, \alpha_i\in\{0,1\}\right\},
\end{equation}
then $S_l\leq\mathbb{Z}_2^{n-t}$.

Denote $k_l=rank(S_l)$,   $0\leq k_l\leq \min(k,n-t)$, and the basis  of $S_l$ as $\left\{e_{i}| e_{i}\in\{0,1\}^{n-t}, 1\leq i\leq k_l \right\}$.

Let
\begin{equation}
S_r=\left\{\sum_{i=1}^k\alpha_i s_{ir}\Bigg|s_{ir}\in\{0,1\}^{t}, \alpha_i\in\{0,1\}\right\},
\end{equation}
then $S_r\leq\mathbb{Z}_2^{t}$.

The following theorem concerning $S(u)$ is useful and important.

\begin{theorem}\label{The1} Suppose function $f:\{0,1\}^n \rightarrow \{0,1\}^m$, satisfies that there is a subgroup $S\leq\mathbb{Z}_2^n$  such that $f(x) = f(y)$ if and only if $x \oplus y\in S$. Then
  $\forall u,v \in \{0,1\}^{n-t},S(u)=S(v)$ if and only if $u \oplus v \in S_l$.
\end{theorem}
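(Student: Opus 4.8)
The plan is to first translate the condition $S(u)=S(v)$ into a purely combinatorial statement about multisets, and then exploit the coset structure that $S$ imposes on each fibre. Since $S(u)$ is just the lexicographically sorted listing of the multiset $G(u)=\{f_w(u)\mid w\in\{0,1\}^t\}$, and both $G(u)$ and $G(v)$ have exactly $2^t$ entries drawn from $\{0,1\}^m$, sorted-concatenation is a canonical encoding of the multiset; hence $S(u)=S(v)$ if and only if $G(u)=G(v)$ as multisets. So the theorem reduces to proving $G(u)=G(v)\iff u\oplus v\in S_l$. Throughout I will write an element of $\{0,1\}^n$ as a pair $(a,b)$ with $a\in\{0,1\}^{n-t}$, $b\in\{0,1\}^t$ (matching the concatenation $uw$). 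The one structural fact I use repeatedly is that $S_l$ is exactly the projection of $S$ onto the left $n-t$ coordinates: every $\sigma=\sum_i\alpha_i s_i\in S$ has left part $\sum_i\alpha_i s_{il}\in S_l$, and conversely each generator of $S_l$ lifts to a generator of $S$, so $S_l=\{\,d\in\{0,1\}^{n-t}\mid \exists\, r\in\{0,1\}^t,\ (d,r)\in S\,\}$.

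For the ``only if'' implication I would argue as follows. If $G(u)=G(v)$, then for any fixed $w$ the value $f_w(u)$ lies in $G(u)=G(v)$, so there is some $w'$ with $f_{w'}(v)=f_w(u)$, i.e. $f(uw)=f(vw')$. By the defining promise on $f$ this is equivalent to $(uw)\oplus(vw')\in S$, that is $(u\oplus v,\,w\oplus w')\in S$. Reading off the left coordinate and invoking the structural fact above immediately yields $u\oplus v\in S_l$.

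The ``if'' implication is where the real work lies, because I must reproduce the multiset $G(v)$ exactly, not merely match its set of distinct values. Writing $d=u\oplus v\in S_l$, the structural fact supplies some $r\in\{0,1\}^t$ with $\sigma:=(d,r)\in S$. The key computation is that for every $w$ the two strings $uw=(v\oplus d,w)$ and $v(w\oplus r)=(v,w\oplus r)$ differ exactly by $\sigma$: indeed $(v\oplus d,w)\oplus(v,w\oplus r)=(d,r)=\sigma\in S$, so the promise gives $f_w(u)=f(uw)=f\big(v(w\oplus r)\big)=f_{w\oplus r}(v)$. Since $w\mapsto w\oplus r$ is a bijection of $\{0,1\}^t$, applying it to the index set turns $G(u)=\{f_w(u)\}_w$ into $\{f_{w\oplus r}(v)\}_w=\{f_{w'}(v)\}_{w'}=G(v)$ with all multiplicities preserved; hence $G(u)=G(v)$ and therefore $S(u)=S(v)$.

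I expect the main obstacle to be precisely this reverse direction, and specifically the need to exhibit a single shift $r$ realising the multiset identity: matching values one at a time only recovers equality of supports, whereas the global bijection $w\mapsto w\oplus r$ coming from a fixed lift $\sigma=(d,r)\in S$ is what upgrades this to genuine multiset equality. One small point worth checking is that the choice of lift is immaterial: two lifts differ by some $e\in\{0,1\}^t$ with $(0^{n-t},e)\in S$, and such an $e$ satisfies $f_w(v)=f_{w\oplus e}(v)$ for all $w$, so it fixes $G(v)$ and the conclusion does not depend on which $\sigma$ over $d$ is selected.
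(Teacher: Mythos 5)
Your proposal is correct and follows essentially the same route as the paper's proof: reduce $S(u)=S(v)$ to the multiset equality $G(u)=G(v)$, get the forward direction by lifting a collision $f(uw)=f(vw')$ to $(u\oplus v)(w\oplus w')\in S$ and projecting onto $S_l$, and get the converse by lifting $u\oplus v\in S_l$ to an element $(u\oplus v)s_r\in S$ and shifting the right index by $s_r$. Your only departure is presentational (and slightly cleaner): you conclude multiset equality in one stroke from the bijection $w\mapsto w\oplus s_r$, whereas the paper derives the two inclusions $G(u)\subseteq G(v)$ and $G(v)\subseteq G(u)$ together with the fibre-cardinality comparisons $|N(u,z)|\le|N(v,z)|$ and $|N(v,z)|\le|N(u,z)|$, which rest on that same shift map.
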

\begin{proof}
  Based on the properties of multiset, $\forall u,v\in\{0,1\}^{n-t},S(u)=S(v)$ if and only if $G(u) = G(v)$. So our goal is to prove $\forall u,v \in \{0,1\}^{n-t},G(u)=G(v)$ if and only if $u \oplus v\in S_l$.

  (1) $\Longleftarrow$. First, we  prove if $u \oplus v\in S_l$, then $G(u) \subseteq G(v)$. We have $\forall z \in G(u), \exists w \in N(u,z)$ such that $z = f(uw)$. According to the definition of the generalized Simon's problem, $\forall s\in S$, $f(uw \oplus s)=f(uw)=z$. In addition, we have $\forall s\in S$, $\exists s_l\in S_l$, $s_r\in S_r$ such that $s=s_ls_r$.

Thus, we have $\forall s\in S$, $z=f(uw\oplus s)=f((u \oplus s_{l})(w \oplus s_r))$. By the definition of group $S$ and $S_l$, we have $\forall s_l\in S_l$ and $s_ls_r\in S$,  $z=f((u \oplus s_{l})(w\oplus s_r))$. Since $u \oplus v\in S_l$, there $\exists s_l\in S_l$ such that $u \oplus s_l=v$.
Hence, we have $z=f(v(w\oplus s_r)) \in G(v)$. Therefore, $G(u) \subseteq G(v)$ and $|N(u,z)|\leq |N(v,z)|$. Similarly, we can prove that $G(v) \subseteq G(u)$ and $|N(v,z)|\leq |N(u,z)|$. 
As a result, we have $G(u) = G(v)$.

  (2) $\Longrightarrow$. Since $G(u) = G(v)$, we have $\forall z \in G(u), z \in G(v) $. Then we have $\exists w, w' \in \{0,1\}^t$ such that $z = f(uw)$ and $z = f(vw')$.  So we have $f(uw) = f(vw')$. According to the definition of the generalized Simon's problem, we have $uw \oplus vw' =(u\oplus v)(w\oplus w')\in S$.  
Further, by the definition of group $S$ and $S_l$, we have $u \oplus v \in S_l$.
\end{proof}

\section{Distributed quantum algorithm for the generalized Simon's problem}\label{Sec4}

In the following, we begin with  giving related notation, function and operators that are used  in distributed quantum algorithm for the generalized Simon's problem, i.e.,  Algorithm \ref{algorithm1}.

Let $[N]$ represent the set of integers $\{0,1,\cdots, 2^t-1\}$, and let ${\rm BI}:\{0,1\}^t \rightarrow [N]$ be the function to convert a binary string of $t$ bits to an equal decimal integer.

The query operators $O'_{f_w}$ in Algorithm \ref{algorithm1}
are defined as 
\begin{equation}
O'_{f_w}\ket{u}\ket{b}\ket{c}=\ket{u}\ket{b}\ket{c\oplus f_w(u)},
\end{equation}
where $u\in\{0,1\}^{n-t}$, $w\in \{0,1\}^t$, $b\in\{0,1\}^{{\rm BI}(w)}$ and $c\in \{0,1\}^m$.

The  operator $U_{Sort}:\{0,1\}^{2^{t+1}m}\rightarrow\{0,1\}^{2^{t+1}m}$ in Algorithm \ref{algorithm1} 
is defined as 
\begin{equation}\label{U_{Sort}}
\begin{split}
&U_{Sort}\left(\bigotimes_{w\in\{0,1\}^{t}}\ket{f_w(u)}\right)|b\rangle\\
=&\left(\bigotimes_{w\in\{0,1\}^{t}}\ket{f_w(u)}\right)\Ket{b\oplus  S(u)},
\end{split}
\end{equation}
where   $\bigotimes\limits_{w\in\{0,1\}^{t}}\ket{f_w(u)}\triangleq \ket{f_{0^t}(u)}\ket{f_{0^{t-1}1}(u)}\cdots \ket{f_{1^t}(u)}$ and $b\in \{0,1\}^{2^tm}$.

Intuitively, the effect of  $U_{Sort}$ in Algorithm \ref{algorithm1} is to sort the values in the $2^t$ control registers by lexicographical order and XOR to the target register.

In addition, for any operator $A_w$ with $w\in \{0,1\}^t$, we let 
$\prod_{w\in\{0,1\}^t}A_w\triangleq A_{1^t}A_{1^{t-1}0}\cdots A_{0^t}$, $\prod'_{w\in\{0,1\}^t}A_w\triangleq A_{0^t}A_{0^{t-1}1}$ $\cdots A_{1^t}$.

\begin{algorithm}[H]
      \caption{Distributed quantum algorithm for finding the elements  in $S_l^{\perp}$}\label{algorithm1}
      \begin{algorithmic}[1]
\Procedure{DSL}{integer $n$, integer $t$, integer $m$, operator $O_{f_w}$}\strut

 \State $Y\gets \{0^{n-t}\}$;

        \State $|\psi_0\rangle = \Ket{0^{n-t}}\Ket{0^{2^{t+1}m}}$;

        \State $|\psi_1\rangle = \left(H^{\otimes n-t}\otimes I^{\otimes 2^{t+1}m}\right)|\psi_0\rangle$;

\State $\ket{\psi_2}=\prod_{w\in\{0,1\}^t}\left(O'_{f_{w}}\otimes I^{\otimes 2^{t+1}m-{\rm BI}(w)\cdot m-m}\right)\ket{\psi_1}$;
              
       \State  $\ket{\psi_3}=\left(I^{\otimes {n-t}}\otimes U_{sort}\right)\ket{\psi_2}$;

        \State $|\psi_4\rangle=\prod'_{w\in\{0,1\}^t}\left(O'_{f_{w}}\otimes I^{\otimes 2^{t+1}m-{\rm BI}(w)\cdot m-m}\right)\ket{\psi_3}$;

        \State $|\psi_5\rangle=\left(H^{\otimes n-t}\otimes I^{\otimes 2^{t+1}m}\right)|\psi_4\rangle$;

	 \State Measure the first  register, and get an element  $z$;
	    \If{$z\notin \langle Y \rangle$} 
           \State $Y\gets Y\cup \{z\}$. 
          \EndIf
	
  \EndProcedure \strut
      \end{algorithmic}
    \end{algorithm}

In the following, we prove the correctness of   Algorithm \ref{algorithm1}. 
The state after the third step of  Algorithm \ref{algorithm1} is
\begin{equation}
\begin{split}
  |\psi_1\rangle&= \left(H^{\otimes n-t}\otimes I^{\otimes 2^{t+1}m}\right)|\psi_0\rangle\\
  &=\frac{1}{\sqrt{2^{n-t}}}\sum_{u\in\{0,1\}^{n-t}}|u\rangle\Ket{0^{2^{t+1}m}}.
\end{split}
\end{equation}

Then Algorithm \ref{algorithm1} queries each of the oracles to get the following state.
\begin{equation}
\begin{split}
\ket{\psi_2}=&\prod_{w\in\{0,1\}^t}\left(O'_{f_{w}}\otimes I^{\otimes 2^{t+1}m-{\rm BI}(w)\cdot m-m}\right)\ket{\psi_1}\\
=&\frac{1}{\sqrt{2^{n-t}}}\sum_{u\in\{0,1\}^{n-t}}|u\rangle\left(\bigotimes_{w\in\{0,1\}^{t}}\ket{f_w(u)}\right)\Ket{0^{2^tm}}.
\end{split}
\end{equation}

After sorting by using $U_{Sort}$, we have the following state.
\begin{equation}
\begin{split}
\ket{\psi_3}=&\left(I^{\otimes {n-t}}\otimes U_{sort}\right)\ket{\psi_2}\\
=&\frac{1}{\sqrt{2^{n-t}}}\sum_{u\in\{0,1\}^{n-t}}|u\rangle\left(\bigotimes_{w\in\{0,1\}^{t}}\ket{f_w(u)}\right)|S(u)\rangle.
\end{split}
\end{equation}

After that, we query each oracle again and  obtain the following state.
\begin{equation}
\begin{split}
|\psi_4\rangle=&\prod\nolimits_{w\in\{0,1\}^t}'\left(O'_{f_{w}}\otimes I^{\otimes 2^{t+1}m-{\rm BI}(w)\cdot m-m}\right)\ket{\psi_3}\\
=&\frac{1}{\sqrt{2^{n-t}}}\sum_{u\in\{0,1\}^{n-t}}|u\rangle\Ket{0^{2^tm}}|S(u)\rangle.
\end{split}
\end{equation}

After using Hadamard transform on the first  register, in the light of  Theorem \ref{The1}, we   get the following state.
\begin{equation}
\begin{split}
	|\psi_5\rangle=&\left(H^{\otimes n-t}\otimes I^{\otimes 2^{t+1}m}\right)|\psi_4\rangle\\
    =&\frac{1}{2^{n-t}}\sum_{u,z\in\{0,1\}^{n-t}}(-1)^{u\cdot z}
\Ket{z,0^{2^tm},S(u)}\\
    =&\frac{1}{2^{n-t}|S_l|}\sum_{u,z\in\{0,1\}^{n-t}}\sum_{s_l\in S_l}(-1)^{(u\oplus s_l)\cdot z}\\
&
\Ket{z,0^{2^tm},S(u\oplus s_l)}\\
   =&\frac{1}{2^{n-t+k_l}}\sum_{u,z\in\{0,1\}^{n-t}}\sum_{s_l\in S_l}(-1)^{u\cdot z}(-1)^{s_l\cdot z}\\
&
\Ket{z,0^{2^tm},S(u)}\\
   =&\frac{1}{2^{n-t+k_l}}\sum_{u,z\in\{0,1\}^{n-t}}\left(\sum_{s_l\in S_l}(-1)^{s_l\cdot z}\right)(-1)^{u\cdot z}\\
&
\Ket{z,0^{2^tm},S(u)}.
\end{split}
\end{equation}

Note that if there exists $s'_l\in S_l$ such that $s'_l \cdot z=1$, then we have 
\begin{equation}
\begin{split}
\sum_{s_l\in S_l}\left(-1\right)^{s_l\cdot z}
	&=\frac{1}{2}\sum_{s_l\in S_l}\left(\left(-1\right)^{s_l\cdot z}+\left(-1\right)^{\left(s_l\oplus s'_l\right)\cdot z}\right)\\
	&=\frac{1}{2}\sum_{s_l\in S_l}\left(\left(-1\right)^{s_l\cdot z}+\left(-1\right)^{s_l\cdot z}\left(-1\right)^{s'_l\cdot z}\right)\\
	&=\frac{1}{2}\sum_{s_l\in S_l}\left(-1\right)^{s_l\cdot z}\left(1+\left(-1\right)^{s'_l\cdot z}\right)\\
&=0.
\end{split}
\end{equation}

If $z\in S_l^{\perp}$, then $\sum_{s_l\in S_l}(-1)^{s_l\cdot z}=2^{k_l}$, so we have
\begin{equation}
\begin{split}
	|\psi_5\rangle=&\frac{1}{2^{n-t}}\sum_{u\in\{0,1\}^{n-t}}\sum_{z\in S_l^{\perp}}(-1)^{u \cdot z}\Ket{z,0^{2^tm},S(u)}\\
	=&\frac{1}{2^{n-t}}\sum_{z\in S_l^{\perp}}\Ket{z}\sum_{u\in\{0,1\}^{n-t}}(-1)^{u \cdot z}\Ket{0^{2^tm},S(u)}.
\end{split}
\end{equation}

Thus, in line 9 of Algorithm \ref{algorithm1}, after measuring the first register of the state $\ket{\psi_5}$, we can obtain an element $z\in{S_l}^{\perp}$.

In line 9 of Algorithm \ref{algorithm1}, since the result we measure may not be linearly independent of the results we measured earlier, there is no guarantee that $\langle Y\rangle^{\perp} = S_l$ can be obtained for  $Y$ obtained after running Algorithm \ref{algorithm1} iteratively many times.

After running Algorithm \ref{algorithm1} iteratively many times, we denote $S'_l=\langle Y\rangle^{\perp}$.  Denote $k'_l=rank\left(S'_l\right)$,   $k_l\leq k'_l\leq n-t$, and the basis  of $S'_l$ as $\left\{e'_{i}| e'_{i}\in\{0,1\}^{n-t}, 1\leq i\leq k'_l \right\}$. Note that $S'_l$ may not be equal to $S_l$. 

Denote $E_l=S_l\bigcap\{e'_{i}|e'_{i}\in\{0,1\}^{n-t}, 1\leq i\leq k'_l\}$, $\widehat{k_l}=|E_l|$, $0\leq \widehat{k_l}\leq \min\left\{k'_l, 2^{k_l}\right\}$.

\begin{algorithm}[H]
			\caption{Distributed quantum algorithm for finding $S$ }\label{algorithm2}
			\begin{algorithmic}[1]
\Procedure{DS}{integer $n$, integer $t$, subgroup $S'_l$}\strut

				\State Query each oracle $O_{f_{w}}$ once in parallel to get $f\left(0^{n-t}w\right)$ $\left(w \in \{0,1\}^t\right)$;

                   \State Query  oracle $O_{f_{0^t}}$ once  in parallel to get $f\left(e'_{i}0^t\right)$, where $\left\{e'_{i}| e'_{i}\in\{0,1\}^{n-t}, 1\leq i\leq k'_l \right\}$ is  the basis of $S'_l$;
				         
                  \State Find  $v_{i}\in\{0,1\}^t$  in parallel such that $f\left(0^{n-t}v_{i}\right)=f\left(e'_{n_{i}}0^t\right)$, where $e'_{n_{i}}\in E_l$, $1\leq i\leq \widehat{k_l}$, $1\leq n_i\leq k'_l$;

\State Find  $V_{j}=\left\{v\Big|f\left(0^{n-t}\left(\sum_{i=1}^{\widehat{k_l}}\beta_iv_i\right)\right)=f\left(0^{n-t}v\right)\right\}$  in parallel, where $\beta_i\in\{0,1\}$, $j=\sum_{i=1}^{\widehat{k_l}}2^{i-1}\beta_i$;

                  \State $S\gets \bigcup_{j=0 }^{2^{\widehat{k_l}}-1} \left\{\left(\sum_{i=1}^{\widehat{k_l}}\beta_ie'_{n_{i}}\right)v\Big|v\in V_{j} \right\}$;

\State \Return $S$.

  \EndProcedure \strut
			\end{algorithmic}
\end{algorithm}

Denote $S'=\bigcup_{j=0 }^{2^{\widehat{k_l}}-1} \left\{\left(\sum_{i=1}^{\widehat{k_l}}\beta_ie'_{n_{i}}\right)v\Big|v\in V_{j} \right\}$. It will be proved below that  $S$ may not be equal to $S'$, and hence Algorithm \ref{algorithm2} is not exact.

In fact, 
if $S'_l\neq S_l$, then there may  exist  $e_i$ in the basis of $S_l$ and $e_i\notin\bigcup_{j=0 }^{2^{\widehat{k_l}}-1} \left\{\sum_{i=1}^{\widehat{k_l}}\beta_ie'_{n_{i}}\Big|e'_{n_i}\in E_l \right\}$, where  $E_l=S_l\bigcap\left\{e'_{i}|e'_{i}\in\{0,1\}^{n-t}, 1\leq i\leq k'_l\right\}$, $\widehat{k_l}=|E_l|$, $\beta_i\in\{0,1\}$ and $j=\sum_{i=1}^{\widehat{k_l}}2^{i-1}\beta_i$.
Let $s=e_iv_i\in S$, where $e_i\in S_l\setminus\bigcup_{j=0 }^{2^{\widehat{k_l}}-1} \left\{\sum_{i=1}^{\widehat{k_l}}\beta_ie'_{n_{i}}\Big|e'_{n_i}\in E_l \right\}$ and $v_i\in S_r$. 
Thus, $s=e_iv_i\notin S'$, which indicates that there is a case where $S$ is not equal to $S'$, i.e., Algorithm \ref{algorithm2} is not exact.

\section{Quantum amplitude amplification}\label{Sec5}

By means of the work of Cai and Qiu \cite{cai_optimal_2018}, we also require a similar method, i.e., the use of quantum amplitude amplification technique to make the distributed quantum algorithm for solving the generalized Simon's problem exactly. 

To make Algorithm  \ref{algorithm1} exact, we add
a post-processing subroutine after line 8 of Algorithm \ref{algorithm1} to ensure $(Y\setminus \{0^{n-t}\})\cup\{z\}$ is always linearly
independent when we get the measured result $z$ of the first  register.

Denote by
\begin{equation}
T=\left\{x_i\Big|\bigcup\nolimits_{i=1}^{2^{n-t-k_l}}\bigcup\nolimits_{s_l\in S_l}\{x_i\oplus s_l\}=\{0,1\}^{n-t}\right\},
\end{equation}
\begin{equation}
S(T)=\{S(u)|u\in T\},
\end{equation}
and let
\begin{equation}
\Ket{{S_l}^{\perp}, 0^{2^tm}, S(T)}
\end{equation}
denote the  state after line 8 of Algorithm  \ref{algorithm1}. 

Let $\mathcal{A}: \{0,1\}^{n-t+2^{t+1}m} \rightarrow \{0,1\}^{n-t+2^{t+1}m}$ denote the combined unitary operators from line 4 to line 8 in Algorithm  \ref{algorithm1}, i.e., $\mathcal{A}$ is defined as
\begin{equation}\label{A}
\begin{split}
\mathcal{A}=&\left(H^{\otimes n-t}\otimes I^{\otimes 2^{t+1}m}\right)\\
&\left(\prod\nolimits_{w\in\{0,1\}^t}'\left(O'_{f_{w}}\otimes I^{\otimes 2^{t+1}m-{\rm BI}(w)\cdot m-m}\right)\right)\\
&\left(I^{\otimes {n-t}}\otimes U_{sort}\right)\\
&\left(\prod\nolimits_{w\in\{0,1\}^t}\left(O'_{f_{w}}\otimes I^{\otimes 2^{t+1}m-{\rm BI}(w)\cdot m-m}\right)\right)\\
&\left(H^{\otimes n-t}\otimes I^{\otimes 2^{t+1}m}\right).
\end{split}
\end{equation}

Define $\mathcal{R}_{0}(\phi):\{0,1\}^{n-t+2^{t+1}m} \rightarrow \{0,1\}^{n-t+2^{t+1}m}$ as
\begin{equation}
\begin{split}
		\label{eq:R_0}
		&\mathcal{R}_{0}(\phi)\Ket{x,b}\\=&
		\begin{cases}
		\Ket{x,b}\text{, } & x\neq 0^{n-t} \text{ or } b\neq 0^{2^{t+1}m}\text{;}\\
		e^{i\phi}\Ket{x,b}\text{, } & x=0^{n-t} \text{ and } b=0^{2^{t+1}m}\text{.}
          \end{cases}
\end{split}
\end{equation}

Define $\mathcal{R}_{\mathcal{A}}(\varphi, Y):\{0,1\}^{n-t} \rightarrow \{0,1\}^{n-t}$ as
\begin{align}
		\label{eq:R_A}
		&\mathcal{R}_{\mathcal{A}}(\varphi, Y)\Ket{x}=
		\begin{cases}
		e^{i\varphi}\Ket{x}\text{, } & x\notin \langle Y \rangle\text{;}\\
		\Ket{x}\text{, } & x\in \langle Y \rangle\text{.}
		\end{cases}
\end{align}

Then by  $\mathcal{R}_{0}(\phi)$  and $\mathcal{R}_{\mathcal{A}}(\varphi, Y)$, we define the quantum amplitude amplification operator $\mathcal{Q}: \{0,1\}^{n-t+2^{t+1}m} \rightarrow \{0,1\}^{n-t+2^{t+1}m}$ as 
\begin{equation}\label{eq:Q}
	\mathcal{Q}
	=-\mathcal{A}\mathcal{R}_{0}(\phi)
	\mathcal{A}^{\dagger}\left(\mathcal{R}_{\mathcal{A}}(\varphi, Y)\otimes I^{\otimes 2^{t+1}m}\right).
\end{equation}

Denote by
 \begin{equation}
  X={S_l}^{\perp}\setminus \langle Y \rangle.
  \end{equation}

\begin{defi}
Let $\Ket{\Psi_{X}}$ denote the projection onto the good state subspace, that is, the subspace spanned by $\big\{\Ket{x,b} \mid x\in X, b\in \{0,1\}^{2^{t+1}m}\big\}$.
\end{defi}

\begin{defi}
Let $\Ket{\Psi_{Y}}$ denote the projection onto the bad state subspace, that is, the subspace spanned by $\big\{\Ket{y,b} \mid y\in \langle Y \rangle, b\in \{0,1\}^{2^{t+1}m} \big\}$.
\end{defi}

We have
\begin{equation}
	\Ket{{S_l}^{\perp},0^{2^tm},S(T)}=\Ket{\Psi_{X}}+\Ket{\Psi_{Y}}.
\end{equation}

 In order to make  Algorithm  \ref{algorithm1} exact, the crucial step is to eliminate all states in $\langle Y \rangle$ from the first register. In quantum
amplitude amplification process, one can achieve this by choosing appropriate $\phi,\varphi \in \mathbb{R}$ such that after applying $\mathcal{Q}$ on $\Ket{{S_l}^{\perp},0^{2^tm},S(T)}$, the
amplitudes of all states in $\langle Y \rangle$ of the first register become zero.

In the following, we present  a proposition related to the  operator $\mathcal{Q}$  acting on the state $\Ket{{S_l}^{\perp},0^{2^tm},S(T)}$, which is proved in Appendix \ref{proof of Theorem 2}.

 \setcounter{theorem}{0}
\begin{Pp}\label{Pp1} 
Let  $\phi=2\arctan{\left(\sqrt{\frac{2^{n-t-r-k_l}}{3\cdot 2^{n-t-r-k_l}-4}}\right)}$, $\varphi = \arccos{\left(\frac{2^{n-t-r-k_l-1}-1}{2^{n-t-r-k_l}-1}\right)}$, where $r=|Y|-1$. Then
	\begin{align}
	     \mathcal{Q}\Ket{S_l^{\perp},0^{2^tm},S(T)} =\Ket{\Psi_{X}}.
	\end{align}
\end{Pp}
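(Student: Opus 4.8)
The plan is to reduce the action of $\mathcal{Q}$ to a two-dimensional invariant subspace and run the standard amplitude-amplification rotation analysis with the generalized phases $\phi,\varphi$. Write $\Ket{\psi}\coloneqq\Ket{S_l^{\perp},0^{2^tm},S(T)}$, which by the derivation of $\ket{\psi_5}$ equals $\mathcal{A}\Ket{0^{n-t},0^{2^{t+1}m}}$. Since $\mathcal{R}_0(\phi)=I+(e^{i\phi}-1)\Ket{0^{n-t},0^{2^{t+1}m}}\Bra{0^{n-t},0^{2^{t+1}m}}$, conjugation gives $\mathcal{A}\mathcal{R}_0(\phi)\mathcal{A}^{\dagger}=I+(e^{i\phi}-1)\Ket{\psi}\Bra{\psi}$. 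The decomposition $\Ket{\psi}=\Ket{\Psi_X}+\Ket{\Psi_Y}$ into orthogonal good/bad parts then confines the whole computation to $\mathrm{span}\{\Ket{\Psi_X},\Ket{\Psi_Y}\}$.

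First I would pin down the weights of the two parts. Using Theorem \ref{The1}, $S(u)$ is constant on each coset of $S_l$ and distinct across cosets, so grouping the sum in $\ket{\psi_5}$ by cosets and using $z\in S_l^{\perp}$ shows that every $z\in S_l^{\perp}$ carries equal probability $2^{k_l-(n-t)}$. Counting $|S_l^{\perp}|=2^{n-t-k_l}$, $|\langle Y\rangle|=2^{r}$ and $|X|=2^{n-t-k_l}-2^{r}$ yields $\|\Ket{\Psi_Y}\|^2=2^{-(n-t-r-k_l)}$ and $\|\Ket{\Psi_X}\|^2=1-2^{-(n-t-r-k_l)}$. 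Setting $a\coloneqq 2^{n-t-r-k_l}$, this reads $\cos^2\theta=1/a$, $\sin^2\theta=1-1/a$, where $\Ket{\Psi_X}=\sin\theta\,\Ket{\tilde X}$ and $\Ket{\Psi_Y}=\cos\theta\,\Ket{\tilde Y}$ for orthonormal $\Ket{\tilde X},\Ket{\tilde Y}$.

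Then I would push $\Ket{\psi}$ through $\mathcal{Q}$ symbolically. Applying $\mathcal{R}_{\mathcal{A}}(\varphi,Y)\otimes I$ multiplies only the good part by $e^{i\varphi}$ (legitimate because $\Ket{\psi}$ is supported on $x\in S_l^{\perp}$ and $\langle Y\rangle\subseteq S_l^{\perp}$), giving $e^{i\varphi}\Ket{\Psi_X}+\Ket{\Psi_Y}$; applying $I+(e^{i\phi}-1)\Ket{\psi}\Bra{\psi}$ adds $\lambda\Ket{\psi}$ with $\lambda=(e^{i\phi}-1)(e^{i\varphi}\sin^2\theta+\cos^2\theta)$, and the overall $-1$ produces
\[
\mathcal{Q}\Ket{\psi}=-(e^{i\varphi}+\lambda)\sin\theta\,\Ket{\tilde X}-(1+\lambda)\cos\theta\,\Ket{\tilde Y}.
\]
The bad part vanishes precisely when $\lambda=-1$. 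The rest is computational: substituting $\cos\varphi=\frac{a-2}{2(a-1)}$ (the stated $\varphi$) gives $e^{i\varphi}\sin^2\theta+\cos^2\theta=\tfrac12\big(1+i\sqrt{(3a-4)/a}\big)$, and one checks that the stated $\phi$ satisfies $(e^{i\phi}-1)\cdot\tfrac12\big(1+i\sqrt{(3a-4)/a}\big)=-1$, where $\cos\phi=\frac{a-2}{2(a-1)}$ follows from $\cos(2\arctan x)=\frac{1-x^2}{1+x^2}$ with $x=\sqrt{a/(3a-4)}$. With $\lambda=-1$ the bad amplitude is zero, so $\mathcal{Q}\Ket{\psi}$ lies entirely in the good subspace, which is the assertion $\mathcal{Q}\Ket{\psi}=\Ket{\Psi_X}$.

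I expect the obstacle to be twofold. The delicate algebra is verifying $\lambda=-1$ for the stated phases, i.e. reconciling the arctan/arccos forms and tracking the full complex phase of $(e^{i\phi}-1)$ rather than just its modulus; the cleanest way to organize this is to note that the given $\phi$ and $\varphi$ in fact coincide, both equalling $\arccos\frac{a-2}{2(a-1)}$. The second, more conceptual point is a normalization caveat: since $\mathcal{Q}$ is unitary, $\mathcal{Q}\Ket{\psi}$ has norm $1$, whereas the orthogonal projection $\Ket{\Psi_X}$ has norm $\sin\theta<1$, so the identity $\mathcal{Q}\Ket{\psi}=\Ket{\Psi_X}$ must be read as stating that $\mathcal{Q}\Ket{\psi}$ is supported entirely on the good subspace (equal to the normalized good state up to a global phase). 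This is exactly what the algorithm needs, since measuring the first register then returns some $z\in X=S_l^{\perp}\setminus\langle Y\rangle$ with certainty, guaranteeing linear independence from $Y$.
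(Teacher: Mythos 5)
Your proposal is correct and takes essentially the same route as the paper's proof: writing $\Ket{\psi}=\Ket{S_l^{\perp},0^{2^tm},S(T)}$, both arguments confine the computation to the two-dimensional span of $\Ket{\Psi_{X}}$ and $\Ket{\Psi_{Y}}$ via $\mathcal{A}\mathcal{R}_{0}(\phi)\mathcal{A}^{\dagger}=I^{\otimes n-t+2^{t+1}m}+\left(e^{i\phi}-1\right)\Ket{\psi}\Bra{\psi}$ and the weights $\Braket{\Psi_{X}|\Psi_{X}}=1-2^{r+k_l+t-n}$, $\Braket{\Psi_{Y}|\Psi_{Y}}=2^{r+k_l+t-n}$, and your vanishing condition $\lambda=-1$ is algebraically identical to the paper's Eq.~(\ref{eq:qaa}) --- the only difference being that the paper solves that equation to derive $\phi$ and $\varphi$ (Eqs.~(\ref{eq:phi_final}) and (\ref{eq:varphi_final})) while you verify the stated values directly, with your observation that $\phi=\varphi=\arccos\frac{a-2}{2(a-1)}$ for $a=2^{n-t-r-k_l}$ being consistent with those formulas. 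Your normalization caveat is moreover a genuine refinement: since $\mathcal{Q}$ is unitary, one actually obtains $\mathcal{Q}\Ket{\psi}=\left(1-e^{i\varphi}\right)\Ket{\Psi_{X}}$ with $\left|1-e^{i\varphi}\right|=1/\sqrt{\Braket{\Psi_{X}|\Psi_{X}}}$, i.e.\ the normalized good state up to a global phase rather than the unnormalized projection $\Ket{\Psi_{X}}$ literally --- a point the paper's proof (which only enforces orthogonality to $\Ket{\Psi_{Y}}$) leaves implicit, and your reading is exactly what the algorithm's exactness requires, since measuring the first register then yields $z\in S_l^{\perp}\setminus\langle Y\rangle$ with certainty.
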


We describe the quantum amplitude amplification algorithm used to measure good states as follows. In fact, since we do not know the rank of  group $S$, i.e., $k_l$, we assume that the rank of  group $S$ is $d_l$, where $0\leq d_l\leq \min(k,n-t)$.

\begin{algorithm}[H]
	\caption{\strut Quantum amplitude amplification for measuring good states}\label{algorithm3}
	\begin{algorithmic}[1]
	\Procedure{QAA}{registers $\Ket{S_l^{\perp},0^{2^tm},S(T)}$, integer $n$, integer $m$, integer $t$, integer $d_l$, operator $\mathcal{A}$, set $Y$ }
	\State $r\gets |Y|-1$;
	\State $\phi\gets 2\arctan{\left(\sqrt{\frac{2^{n-t-r-d_l}}{3\cdot 2^{n-t-r-d_l}-4}}\right)}$;
	\State $\varphi\gets \arccos{\left(\frac{2^{n-t-r-d_l-1}-1}{2^{n-t-r-d_l}-1}\right)}$;
	\State Apply $\mathcal{Q}$ to $\Ket{S_l^{\perp},0^{2^tm}, S(T)}$, where
$\mathcal{Q}=-\mathcal{A}\mathcal{R}_{0}(\phi)
	\mathcal{A}^{\dagger}\left(\mathcal{R}_{\mathcal{A}}(\varphi, Y)\otimes I^{\otimes 2^{t+1}m}\right)$;
	\State Measure the first register, get the result $z$;
	\State \Return $z$.
	\EndProcedure \strut
	\end{algorithmic}
\end{algorithm}

\section{Exact distributed  quantum algorithm for the generalized Simon's problem}\label{Sec6}

In this section, we  first design  an  exact distributed  quantum algorithm for finding $S_l$, i.e., Algorithm \ref{algorithm4}, which combines  Algorithm \ref{algorithm1} and Algorithm \ref{algorithm3}. After finding $S_l$, we design Algorithm \ref{algorithm5} to find the hidden subgroup $S$ exactly.

We describe the main design idea for Algorithm \ref{algorithm4} as follows. First, we  use Algorithm \ref{algorithm1} to ensure that the state of the first register is in $S_l^{\perp}$, and then we utilize quantum amplitude amplification technique \cite{BHMT02} to ensure that the measured result of the first register is not in $\langle Y\rangle$.

Since $rank(S_l)$ is undetermined, we assume it is $d_l$ and initialise $d_l=0$. 
For a given $d_l$, if $d_l\neq k_l$, then during the iterative run of  Algorithm \ref{algorithm4},  it must be obtained that $z\in\langle Y\rangle$. In this case, the value of $d_l$ is increased by 1.  If $z\notin\langle Y\rangle$, then add $z$ to $Y$. After that,  Algorithm \ref{algorithm4} is run repeatedly.

 If $|Y|=n-t+1-d_l$, this means that $d_l$ has been increased to equal $k_l$. We have also obtained the set $Y$, which satisfies $\langle Y \rangle=S_l^{\perp}$. Eventually, by solving the system of exclusive-or equations, we can obtain $S_l= \langle Y \rangle ^{\perp}$.

\begin{figure*}[htbp]
  \centering
  \includegraphics[width=\textwidth]{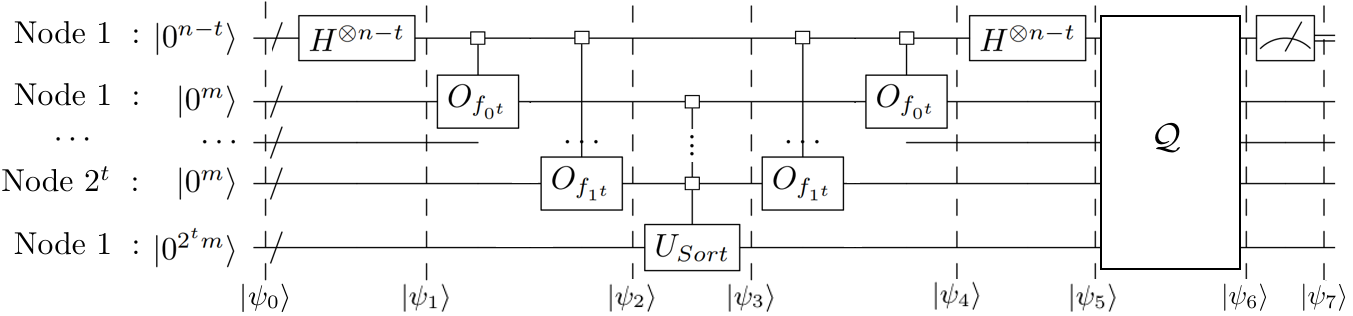}
  \caption{The circuit for the quantum part of   exact  distributed   quantum algorithm for finding $S_l$  (Algorithm \ref{algorithm4}).}
  \label{algorithm4_circuit}
\end{figure*}

\begin{algorithm}[H]
	\caption{\strut Exact  distributed quantum algorithm for finding $S_l$ }
	\label{algorithm4}
	\begin{algorithmic}[1]
	\Procedure{EDSL}{integer $n$, integer $t$, integer $m$, operator $O_{f_w}$}\strut
     \State $d_l\gets 0$;
	\State $Y\gets \{0^{n-t}\}$;
	  \Repeat
	  \State Prepare registers $\Ket{0^{n-t+2^{t+1}m}}$;

      \State  Apply  operator $\mathcal{A}$ on $\Ket{0^{n-t+2^{t+1}m}}$, where $\mathcal{A}$ is defined in Eq. (\ref{A});

	  \State $z\gets$ \Call{QAA}{$\mathcal{A}\Ket{0^{n-t+2^{t+1}m}}$, $n$, $m$, $t$, $d_l$, $\mathcal{A}$, $Y$};

\If{ $z\in\langle Y \rangle$} 
                          \State $d_l\gets d_l+1$;       
                          \Else 
                          \State $Y\gets Y\cup \{z\}$; 
                      \EndIf

	  \Until{$|Y|=n-t+1-d_l$};
	\State Solve the system of exclusive-or equations, get $S_l= \langle Y \rangle ^{\perp}$;
	\State \Return $S_l$.
	\EndProcedure \strut
	\end{algorithmic}
\end{algorithm}

In the following, we present Algorithm \ref{algorithm5}. The main design idea of Algorithm \ref{algorithm5} is to first find the associated string corresponding to the base of  group $S_l$, and then concatenate the strings formed by the base of  group $S_l$ with their corresponding associated strings, and finally merge them to form the hidden subgroup $S$ exactly.

\begin{algorithm}[H]
			\caption{Exact distributed  quantum algorithm for finding $S$ }\label{algorithm5}
			\begin{algorithmic}[1]
\Procedure{EDS}{integer $n$, integer $t$, subgroup $S_l$}\strut

				\State Query each oracle $O_{f_{w}}$ once in parallel to get $f\left(0^{n-t}w\right)$ $\left(w \in \{0,1\}^t\right)$;

                   \State Query  oracle $O_{f_{0^t}}$ once  in parallel to get $f\left(e_{i}0^t\right)$, where $\left\{e_{i}| e_{i}\in\{0,1\}^{n-t}, 1\leq i\leq k_l \right\}$ is  the basis of $S_l$;
				         
                  \State Find  $v_i\in\{0,1\}^t$  in parallel such that $f\left(0^{n-t}v_i\right)=f\left(e_i0^t\right)$;

\State Find  $V^*_j=\left\{v\Big|f\left(0^{n-t}\left(\sum_{i=1}^{k_l}\gamma_iv_i\right)\right)=f\left(0^{n-t}v\right)\right\}$  in parallel, where $\gamma_i\in\{0,1\}$, $j=\sum_{i=1}^{k_l}2^{i-1}\gamma_i$;

                  \State $S\gets \bigcup_{j=0 }^{2^{k_l}-1} \left\{\left(\sum_{i=1}^{k_l}\gamma_ie_i\right)v\Big|v\in V^*_j \right\}$;

\State \Return $S$.

  \EndProcedure \strut
			\end{algorithmic}
\end{algorithm}

In the following, we prove the correctness of Algorithm \ref{algorithm4} and Algorithm \ref{algorithm5}.

First, we prove the correctness of Algorithm \ref{algorithm4}.
In line 6 of Algorithm \ref{algorithm4}, since operator $\mathcal{A}$ is defined in Eq. (\ref{A}), which is the combined unitary operators from line 4 to line 8 in Algorithm  \ref{algorithm1}, the following equation can be obtained  from the proof of correctness of Algorithm \ref{algorithm1}.

\begin{equation}
\begin{split}
&\mathcal{A}\Ket{0^{n-t+2^{t+1}m}}\\
=&\ket{\psi_5}\\
	=&\frac{1}{2^{n-t}}\sum_{z\in S_l^{\perp}}\Ket{z}\sum_{u\in\{0,1\}^{n-t}}(-1)^{u \cdot z}\Ket{0^{2^tm},S(u)}\\
=&\Ket{{S_l}^{\perp},0^{2^tm},S(T)}.
\end{split}
\end{equation}

Applying $\mathcal{Q}$ to $\mathcal{A}\Ket{0^{n-t+2^{t+1}m}}$,  we have the state $\ket{\psi_6}$ in FIG. \ref{algorithm4_circuit} as
\begin{equation}
\begin{split}
	\ket{\psi_6}=&\mathcal{Q}\mathcal{A}\Ket{0^{n-t+2^{t+1}m}}\\
=&\mathcal{Q}\Ket{{S_l}^{\perp},0^{2^tm},S(T)},
\end{split}
\end{equation}
where $\mathcal{Q}
	=-\mathcal{A}\mathcal{R}_{0}(\phi)
	\mathcal{A}^{\dagger}\left(\mathcal{R}_{\mathcal{A}}(\varphi, Y)\otimes I^{\otimes 2^{t+1}m}\right)$, $\phi=2\arctan{\left(\sqrt{\frac{2^{n-t-r-k_l}}{3\cdot 2^{n-t-r-k_l}-4}}\right)}$, $\varphi = \arccos{\left(\frac{2^{n-t-r-k_l-1}-1}{2^{n-t-r-k_l}-1}\right)}$, $r=|Y|-1$.
	
Based on the Proposition \ref{Pp1}, we have
\begin{equation}
	\ket{\psi_6}=\Ket{\Psi_{X}}.
\end{equation}

After measuring on the first  register, we can get an element that is in $S_l^{\perp}\setminus \langle Y \rangle$. After $n-t-k_l$ repetitions of  Algorithm \ref{algorithm4}, we can obtain $n-t-k_l$ elements in $S_l^{\perp}$. Then, using the classical Gaussian elimination method, we can obtain $S_l$.

If we have already found out $S_l$, we can use Algorithm \ref{algorithm5} to find out the hidden subgroup $S$. In the following, we prove the correctness of Algorithm \ref{algorithm5}.

Denote by
\begin{equation}
\begin{split}
S^*= \bigcup_{j=0 }^{2^{k_l}-1} \left\{\left(\sum_{i=1}^{k_l}\gamma_ie_i\right)v\Bigg|v\in V^*_j \right\},
\end{split}
\end{equation}
where $\gamma_i\in\{0,1\}$, $j=\sum_{i=1}^{k_l}2^{i-1}\gamma_i$.

To prove the correctness of Algorithm \ref{algorithm5}, we prove that $S = S^*$.

First, we prove that $S^*\subseteq S$. 

Let $s^*=\left(\sum_{i=1}^{k_l}\gamma_ie_i\right)v$ be any element belonging to $S^*$. Since $f\left(0^{n-t}v_i\right)=f\left(e_i0^t\right)$, by the definition of the generalized Simon's problem, we have 
\begin{equation}
f\left(0^{n-t}\left(\sum_{i=1}^{k_l}\gamma_iv_i\right)\right)=f\left(\left(\sum_{i=1}^{k_l}\gamma_ie_i\right)0^t\right). 
\end{equation}

Moreover,  for any $v\in V_j$, there is 
\begin{equation}
f\left(0^{n-t}v\right)=f\left(0^{n-t}\left(\sum_{i=1}^{k_l}\gamma_iv_i\right)\right). 
\end{equation}

Therefore, we have $f\left(0^{n-t}v\right)=f\left(\left(\sum_{i=1}^{k_l}\gamma_ie_i\right)0^t\right)$. According to the definition of the generalized Simon's problem, we have $s^*=\left(\sum_{i=1}^{k_l}\gamma_ie_i\right)v\in S$. Thus, $S^*\subseteq S$.

Then, we prove that $S\subseteq S^*$. 

Let $s=s_ls_r$ be any element belonging to $S$, where $s_l\in S_l$, $s_r\in S_r$. 

Since $\left\{e_{i}| e_{i}\in\{0,1\}^{n-t}, 1\leq i\leq k_l \right\}$  is  the basis of $S_l$, we also have
\begin{equation}
 s_l\in\left\{\sum_{i=1}^{k_l}\gamma_ie_i\Big|e_i\in\{0,1\}^{n-t}, \gamma_i\in\{0,1\}\right\}. 
\end{equation}

By the definition of the generalized Simon's problem, we have $f(0^{n-t}s_r)=f(s_l0^t)$.

 Furthermore, there exists $\gamma_i\in\{0,1\}$ such that 
\begin{equation}
\begin{split}
f(0^{n-t}s_r)=&f(s_l0^t)\\
=&f\left(\left(\sum_{i=1}^{k_l}\gamma_ie_i\right)0^t\right)\\
=&f\left(0^{n-t}\left(\sum_{i=1}^{k_l}\gamma_iv_i\right)\right). 
\end{split}
\end{equation}

From the definition of $V^*_j$, we have $s_r\in V^*_j$. Thus, $s\in S^*$.  Consequently, $S\subseteq S^*$.

\section{Comparisons with other algorithms}\label{Sec7}

First, we compare Algorithm \ref{algorithm1} with the  distributed  classical randomized algorithm for solving the generalized Simon's problem. Based on the previous analysis, Algorithm \ref{algorithm1} needs $O(n-t-k_l)$ queries for finding group $S_l$. However, in order to find group $S_l$, the  distributed  classical randomized algorithm needs to query oracles $\Omega\left(\max\left\{k_l,\sqrt{2^{n-t-k_l}}\right\}\right)$ times.

\begin{table}[H]   
\begin{center}   

\caption{Comparison of Algorithm \ref{algorithm1} with distributed classical randomized algorithm.}  
\begin{tabular}{|c|c|}   
\hline   \textbf{Algorithms} & \textbf{Query complexity} \\   
\hline   Algorithm \ref{algorithm1} & $O(n-t-k_l)$ \\ 
\hline  \makecell[c]{Distributed classical  \\ randomized algorithm}  & $\Omega\left(\max\left\{k_l,\sqrt{2^{n-t-k_l}}\right\}\right)$ \\  
\hline   
\end{tabular}   
\end{center}   
\end{table}

Therefore, Algorithm \ref{algorithm1} has the advantage of exponential acceleration compared with  the  distributed classical randomized algorithm.

Second, we compare Algorithm \ref{algorithm4} with the  distributed  classical deterministic algorithm for solving the generalized Simon's problem. According to the previous analysis, Algorithm \ref{algorithm4} needs $O(n-t)$ queries  for finding group $S_l$. However, in order to find group $S_l$, the  distributed  classical deterministic algorithm needs to query oracles $\Omega\left(\max\left\{k_l,\sqrt{2^{n-t-k_l}}\right\}\right)$ times.

\begin{table}[H]   
\begin{center}   

\caption{Comparison of Algorithm \ref{algorithm4} with distributed classical deterministic algorithm.}  
\begin{tabular}{|c|c|}   
\hline   \textbf{Algorithms} & \textbf{Query complexity} \\   
\hline   Algorithm \ref{algorithm4} & $O(n-t)$ \\ 
\hline  \makecell[c]{Distributed classical\\ deterministic  algorithm}  & $\Omega\left(\max\left\{k_l,\sqrt{2^{n-t-k_l}}\right\}\right)$ \\  
\hline   
\end{tabular}   
\end{center}   
\end{table}

Hence, Algorithm \ref{algorithm4} has the advantage of exponential acceleration compared with  the  distributed classical deterministic algorithm.

Third, we compare Algorithm \ref{algorithm1} and Algorithm \ref{algorithm4} with the centralized quantum algorithm for solving  the generalized Simon's problem. In Algorithm \ref{algorithm1} and Algorithm \ref{algorithm4}, the number of actual functioning qubits  for  each oracle is only $n-t+m$. However, in the centralized quantum algorithm, the number of actual functioning qubits  for each oracle is $n+m$. 

\begin{table}[H]   
\begin{center}   

\caption{Comparison of Algorithm \ref{algorithm1} and Algorithm \ref{algorithm4} with the centralized quantum algorithm for solving  the generalized Simon's problem.}  
\begin{tabular}{|c|c|}   
\hline    \makecell[c]{\textbf{Algorithms}} & \makecell[c]{\textbf{The number of actual}\\ \textbf{functioning qubits for}\\  \textbf{each oracle } } \\   
\hline   Algorithm \ref{algorithm1} & $n-t+m$ \\ 
\hline   Algorithm \ref{algorithm4} & $n-t+m$ \\ 
\hline  \makecell[c]{The centralized quantum\\ algorithm for solving the\\  generalized Simon's problem} &  $n+m$ \\  
\hline   
\end{tabular}   
\end{center}   
\end{table}

Consequently, Algorithm \ref{algorithm1} and Algorithm \ref{algorithm4} facilitate the reduction of circuit depth and the physical implementation of  algorithm in the NISQ era.

Finally, we compare Algorithm \ref{algorithm2} and Algorithm \ref{algorithm5} with the best distributed quantum algorithm  for  Simon's problem  proposed previously \cite{Tan2022DQCSimon}. Algorithm \ref{algorithm2} and Algorithm \ref{algorithm5} can not only solve Simon's problem, but also  the generalized Simon's problem. In particular, Algorithm \ref{algorithm5} is exact. However, the   algorithm   \cite{Tan2022DQCSimon}  cannot  solve  the generalized Simon's problem and is not exact.

\begin{table}[H]   
\begin{center}   
\caption{Comparison of Algorithm \ref{algorithm2} and Algorithm \ref{algorithm5} with  algorithm in \cite{Tan2022DQCSimon}.}  
\label{table:1} 
\begin{tabular}{|c|c|c|}   
\hline   \textbf{Algorithms} & \textbf{Precision}& \textbf{Generalisability} \\   
\hline   Algorithm \ref{algorithm2} & Inaccuracy&\makecell[c]{The generalized \\ Simon's problem} \\ 
\hline   Algorithm \ref{algorithm5} & Exact& \makecell[c]{The generalized \\ Simon's problem} \\ 
\hline   \makecell[c]{The  algorithm  in \cite{Tan2022DQCSimon}} &Inaccuracy& Simon's problem\\  
\hline   
\end{tabular}   
\end{center}   
\end{table}

As a result, Algorithm \ref{algorithm2} and Algorithm \ref{algorithm5} have better generalisability.  Algorithm \ref{algorithm5} not only has better generalisability, but also has the advantage of being exact.

\section{Conclusion}\label{Sec8}

In this paper, we have characterized the structure of  the generalized Simon's problem in  distributed scenario. Based on the structure, we have designed a corresponding distributed quantum algorithm. Then we have further utilized quantum amplitude amplification technique to make our algorithm exact.
The number of actual functioning qubits  for  each oracle in our algorithm is reduced, which reduces the circuit depth and helps reduce circuit noise, making our algorithm easier to be implemented in the current NISQ era.

Our  algorithm has the advantage of exponential acceleration compared with the  distributed
 classical algorithm.
Compared to the centralized quantum algorithm for  the generalized Simon's problem, the oracle in our algorithm is easier to be implemented, which is an important advantage for implementing quantum query algorithms.
 Compared with the best distributed quantum algorithm  for Simon's problem proposed previously, the  exact distributed quantum algorithm we designed for  the generalized Simon's problem has the advantage of better generalisability and exactness.

We found that characterizing the essential structure of  problem is crucial for designing  corresponding  exact  distributed  quantum algorithm. In future research work, our algorithm may be instructive for designing   exact  distributed quantum algorithms for solving the hidden subgroup problem. Furthermore, the ideas and methods employed in the design of our algorithm may also be useful for the design of  exact  distributed  quantum algorithms for solving other problems.

\section*{Acknowledgements}
This work is supported in part by the National Natural Science Foundation of China (Nos. 61876195, 61572532), and the Natural Science Foundation of Guangdong Province of China (No. 2017B030311011).

\appendix
\section{Proof of Proposition 1}
\label{proof of Theorem 2}
\setcounter{theorem}{0}
\begin{Pp}\label{Pp1} 
Let  $\phi=2\arctan{\left(\sqrt{\frac{2^{n-t-r-k_l}}{3\cdot 2^{n-t-r-k_l}-4}}\right)}$, $\varphi = \arccos{\left(\frac{2^{n-t-r-k_l-1}-1}{2^{n-t-r-k_l}-1}\right)}$, where $r=|Y|-1$. Then
	\begin{align*}
	     \mathcal{Q}\Ket{S_l^{\perp},0^{2^tm},S(T)} =\Ket{\Psi_{X}}.
	\end{align*}
\end{Pp}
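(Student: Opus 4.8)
The plan is to collapse the action of $\mathcal{Q}$ onto the two-dimensional subspace $\mathrm{span}\{\Ket{\Psi_X},\Ket{\Psi_Y}\}$ and to show that the prescribed angles annihilate the bad component $\Ket{\Psi_Y}$. First I would rewrite the two reflection-type factors in projector form. Setting $\Ket{\Psi}\coloneqq\mathcal{A}\Ket{0^{n-t+2^{t+1}m}}=\Ket{S_l^{\perp},0^{2^tm},S(T)}$, the definition (\ref{eq:R_0}) gives $\mathcal{R}_{0}(\phi)=I-(1-e^{i\phi})\Ket{0}\Bra{0}$, hence $\mathcal{A}\mathcal{R}_{0}(\phi)\mathcal{A}^{\dagger}=I-(1-e^{i\phi})\Ket{\Psi}\Bra{\Psi}$. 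On the support of $\Ket{\Psi}$, which lives entirely on first-register values $z\in S_l^{\perp}$, the values $z\notin\langle Y\rangle$ are exactly those in $X$, so by (\ref{eq:R_A}) the operator $\mathcal{R}_{\mathcal{A}}(\varphi,Y)\otimes I$ acts as $I-(1-e^{i\varphi})P_X$, where $P_X$ projects onto the good subspace. Thus on this subspace
\[
\mathcal{Q}=-\bigl(I-(1-e^{i\phi})\Ket{\Psi}\Bra{\Psi}\bigr)\bigl(I-(1-e^{i\varphi})P_X\bigr).
\]

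Second, I would apply this to $\Ket{\Psi}=\Ket{\Psi_X}+\Ket{\Psi_Y}$. Since $P_X\Ket{\Psi}=\Ket{\Psi_X}$, the inner factor gives $e^{i\varphi}\Ket{\Psi_X}+\Ket{\Psi_Y}$. Using $\Braket{\Psi_X|\Psi_Y}=0$ together with the squared norms $a^2\coloneqq\Vert\Ket{\Psi_X}\Vert^2$ and $b^2\coloneqq\Vert\Ket{\Psi_Y}\Vert^2$, I obtain $\Braket{\Psi|e^{i\varphi}\Psi_X+\Psi_Y}=e^{i\varphi}a^2+b^2$ and therefore
\[
\mathcal{Q}\Ket{\Psi}=-\bigl(e^{i\varphi}-c\bigr)\Ket{\Psi_X}-\bigl(1-c\bigr)\Ket{\Psi_Y},\quad c\coloneqq(1-e^{i\phi})(e^{i\varphi}a^2+b^2).
\]
Here I would pin down the amplitudes from the explicit form of $\Ket{\Psi}$: each first-register value $z\in S_l^{\perp}$ carries equal weight $2^{-(n-t-k_l)}$, while $\langle Y\rangle\subseteq S_l^{\perp}$ contributes exactly $|\langle Y\rangle|=2^{r}$ bad values, so that $b^2=2^{\,r-(n-t-k_l)}=2^{-\ell}$ and $a^2=1-2^{-\ell}$ with $\ell\coloneqq n-t-r-k_l$.

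Third, and this is the crux, I would verify that the stated $\phi,\varphi$ force $c=1$, which kills the $\Ket{\Psi_Y}$ term. With $\cos\varphi=\frac{2^{\ell-1}-1}{2^{\ell}-1}$ a short computation collapses the real part to exactly $\tfrac12$, giving $e^{i\varphi}a^2+b^2=\tfrac12(1+i\beta)$ with $\beta=\sqrt{(3\cdot2^{\ell}-4)/2^{\ell}}$. Applying the half-angle identity to $\phi=2\arctan\sqrt{2^{\ell}/(3\cdot2^{\ell}-4)}$ then yields $1-e^{i\phi}=2/(1+i\beta)$, whence $c=1$. Consequently $\mathcal{Q}\Ket{\Psi}=(1-e^{i\varphi})\Ket{\Psi_X}$, a state supported entirely on the good subspace; since $|1-e^{i\varphi}|^2=2(1-\cos\varphi)=\frac{2^{\ell}}{2^{\ell}-1}=1/a^2$, this output has unit norm, so it is precisely the normalized good state $\Ket{\Psi_X}$ (up to the unit-modulus factor fixing normalization), which is the assertion of the proposition and is exactly what the subsequent exactness argument requires, namely that every measurement of the first register returns an element of $X=S_l^{\perp}\setminus\langle Y\rangle$.

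The hard part will be the third step: matching two transcendental prescriptions, one for $\phi$ and one for $\varphi$, to the single complex equation $c=1$. The cleanest route is to treat $c=1$ as the design constraint, solve $1-e^{i\phi}=1/(e^{i\varphi}a^2+b^2)$, and then confirm that separating this into real and imaginary parts reproduces exactly the stated $\arctan$ and $\arccos$ values via $1-\cos\phi=\frac{2\tan^2(\phi/2)}{1+\tan^2(\phi/2)}$ and $|1-e^{i\varphi}|^2=2(1-\cos\varphi)$. The algebra is routine, but the powers of two must be tracked carefully so that the real part of $e^{i\varphi}a^2+b^2$ cancels to $\tfrac12$ and the modulus identities line up as claimed.
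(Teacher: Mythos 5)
Your proof is correct, and its skeleton coincides with the paper's: both arguments reduce $\mathcal{Q}$ to the two-dimensional span of $\Ket{\Psi_X}$ and $\Ket{\Psi_Y}$ via the projector forms $\mathcal{A}\mathcal{R}_0(\phi)\mathcal{A}^{\dagger}=I-(1-e^{i\phi})\Ket{\Psi}\Bra{\Psi}$ and the phase action of $\mathcal{R}_{\mathcal{A}}(\varphi,Y)\otimes I$, use the same norms $\Braket{\Psi_X|\Psi_X}=1-2^{-\ell}$, $\Braket{\Psi_Y|\Psi_Y}=2^{-\ell}$ with $\ell=n-t-r-k_l$, and impose that the $\Ket{\Psi_Y}$ amplitude vanish; your condition $c=1$ is exactly the paper's Eq.~(\ref{eq:qaa}) after substituting $b^2=1-a^2$. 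Where you differ is the direction of the algebra: the paper treats $\phi,\varphi$ as unknowns and \emph{derives} them from the constraint (squaring to reach $4b-1=\cot^2(\phi/2)$ and then reading off the $\arctan$ and $\arccos$ expressions), whereas you \emph{verify} the stated angles forward through the clean factorization $e^{i\varphi}a^2+b^2=\tfrac12(1+i\beta)$ and $1-e^{i\phi}=2/(1+i\beta)$ with $\beta=\sqrt{(3\cdot 2^{\ell}-4)/2^{\ell}}$; for the proposition as stated (angles given, conclusion asserted), your verification is arguably the more direct proof, while the paper's derivation explains where the angles come from. You are also more careful than the paper at the last step: your computation shows $\mathcal{Q}\Ket{\Psi}=(1-e^{i\varphi})\Ket{\Psi_X}$, a unit-norm vector \emph{proportional} to the projection $\Ket{\Psi_X}$ (which has norm $a<1$), whereas the paper's proof stops once the angles are derived and never confronts this residual factor — so the literal equality $\mathcal{Q}\Ket{S_l^{\perp},0^{2^tm},S(T)}=\Ket{\Psi_X}$ holds only after normalizing and discarding a global phase, which, as you correctly note, is harmless for the measurement in Algorithm~\ref{algorithm3} since the support of the output state is all that matters.
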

\begin{proof}
	From Eq. (\ref{eq:R_0}), we can write $\mathcal{R}_{0}(\phi)$ as follows.
	\begin{equation}
	\begin{split}
		\mathcal{R}_{0}(\phi) =&I^{\otimes n-t + 2^{t+1}m}\\ 
		& - \left(1 - e^{i\phi}\right)\Ket{0^{n-t}, 0^{2^{t+1}m}}\Bra{0^{n-t}, 0^{2^{t+1}m}}.
      \end{split}
	\end{equation}
	
	From the definitions of $\mathcal{R}_{\mathcal{A}}(\varphi, Y)$, $\Ket{\Psi_{X}}$ and $\Ket{\Psi_{Y}}$, we have
	\begin{align}
		\left(\mathcal{R}_{\mathcal{A}}(\varphi, Y)\otimes I^{\otimes {2^{t+1}}m}\right)\Ket{\Psi_{X}} &= e^{i\varphi}\Ket{\Psi_{X}}.
	\end{align}
	\begin{align}
		\left(\mathcal{R}_{\mathcal{A}}(\varphi, Y)\otimes I^{\otimes {2^{t+1}}m}\right)\Ket{\Psi_{Y}} &= \Ket{\Psi_{Y}}.
	\end{align}
	
	Let $\mathcal{U}(\mathcal{A}, \phi) = -\mathcal{A}\mathcal{R}_{0}(\phi)\mathcal{A}^{\dagger}$, then based on Eq. (\ref{eq:Q}), $\mathcal{Q}$ can be written as
	\begin{equation}
		\mathcal{Q} = \mathcal{U}(\mathcal{A}, \phi)\left(\mathcal{R}_{\mathcal{A}}(\varphi, Y)\otimes I^{\otimes {2^{t+1}}m}\right).
	\end{equation}
	
	For $\mathcal{U}(\mathcal{A}, \phi)$, we have
	\begin{equation}
	  \begin{split}
		\mathcal{U}(\mathcal{A}, \phi) =&-\mathcal{A}\mathcal{R}_{0}(\phi)\mathcal{A}^{\dagger}\\
				   =&-\mathcal{A}\left(I^{\otimes n-t + 2^{t+1}m}\right. - \\
				   &\left.\left(1 - e^{i\phi}\right)\Ket{0^{n-t}, 0^{2^{t+1}m}}\Bra{0^{n-t}, 0^{2^{t+1}m}}\right)\mathcal{A}^{\dagger}\\
				   =&  \left(1 - e^{i\phi}\right)\left(\mathcal{A}\Ket{0^{n-t}, 0^{2^{t+1}m}}\Bra{0^{n-t}, 0^{2^{t+1}m}}\mathcal{A}^{\dagger}\right)\\
				    &- I^{\otimes n-t + 2^{t+1}m}\\
				  =& \left(1 - e^{i\phi}\right)\Ket{K^{\perp}, 0^{2^tm}, S(T)}\Bra{K^{\perp}, 0^{2^tm}, S(T)}\\
				  & - I^{\otimes n-t + 2^{t+1}m}\\
				   =& \left(1 - e^{i\phi}\right)\big(\Ket{\Psi_{X}}+\Ket{\Psi_{Y}}\big)\big(\Bra{\Psi_{X}}+\Bra{\Psi_{Y}}\big)\\
				  & - I^{\otimes n-t + 2^{t+1}m}.
        \end{split}
	\end{equation}
	
Since $| \langle Y \rangle | = 2 ^ {r} $and $| S_l ^ {\perp} | = 2 ^ {n-t - k_l} $, according to the definition of $\Ket {\Psi_ {X}} $ and $\Ket {\Psi_ {Y}} $, we have
\begin{equation}
\begin{split}
	\Braket{\Psi_{X}|\Psi_{X}} &= 1-2^{r+k_l+t-n}.\\
     \Braket{\Psi_{Y}|\Psi_{Y}} &= 2^{r+k_l+t-n}.\\
	\Braket{\Psi_{X}|\Psi_{Y}} &= 0.
\end{split}
\end{equation}

Thus, we can obtain the following equations.
\begin{equation}\label{thm2eq1}
\begin{split}
	\mathcal{Q}\Ket{\Psi_{X}} =& \mathcal{U}(\mathcal{A}, \phi)\left(\mathcal{R}_{\mathcal{A}}(\varphi, Y)\otimes I^{\otimes 2^{t+1}m}\right)\Ket{\Psi_{X}}\\
							  =& e^{i\varphi}\mathcal{U}(\mathcal{A}, \phi)\Ket{\Psi_{X}}\\
							  =& e^{i\varphi}\Big(\left(1 - e^{i\phi}\right)\big(\Ket{\Psi_{X}}+\Ket{\Psi_{Y}}\big)\big(\Bra{\Psi_{X}}+\Bra{\Psi_{Y}}\big) \\
							& - I^{\otimes n-t + 2^{t+1}m}\Big)\Ket{\Psi_{X}}\\
							  =& e^{i\varphi}\left(1 - e^{i\phi}\right)\big(\Ket{\Psi_{X}}+\Ket{\Psi_{Y}}\big)\big(\Bra{\Psi_{X}}+\Bra{\Psi_{Y}}\big)\\
							  &\Ket{\Psi_{X}} - e^{i\varphi}\Ket{\Psi_{X}}\\
							  =& e^{i\varphi}\left(1 - e^{i\phi}\right)\Braket{\Psi_{X}|\Psi_{X}}\big(\Ket{\Psi_{X}}+\Ket{\Psi_{Y}}\big)\\
							   &- e^{i\varphi}\Ket{\Psi_{X}}\\
							  =& e^{i\varphi}\left(1 - e^{i\phi}\right)\left(1-2^{r+k_l+t-n}\right)\big(\Ket{\Psi_{X}}+\Ket{\Psi_{Y}}\big)\\
							  & - e^{i\varphi}\Ket{\Psi_{X}}\\
							  =& e^{i\varphi}\left((1-e^{i\phi})(1-2^{r+k_l+t-n})-1\right)\Ket{\Psi_{X}}\\
							  & + e^{i\varphi}(1-e^{i\phi})(1-2^{r+k_l+t-n})\Ket{\Psi_{Y}}.
\end{split}						  
\end{equation}

\begin{equation}\label{thm2eq2}
\begin{split}
	\mathcal{Q}\Ket{\Psi_{Y}} =& \mathcal{U}(\mathcal{A}, \phi)\left(\mathcal{R}_{\mathcal{A}}(\varphi, Y)\otimes
	                             I^{\otimes 2^{t+1}m}\right)\Ket{\Psi_{Y}}\\
							  =& \mathcal{U}(\mathcal{A}, \phi)\Ket{\Psi_{Y}}\\
							  =& \Big(\left(1 - e^{i\phi}\right)\big(\Ket{\Psi_{X}}+\Ket{\Psi_{Y}}\big)\big(\Bra{\Psi_{X}}+\Bra{\Psi_{Y}}\big)\\
							   &- I^{\otimes n-t + 2^{t+1}m}\Big)\Ket{\Psi_{Y}}\\
							  =& \left(1 - e^{i\phi}\right)\big(\Ket{\Psi_{X}}+\Ket{\Psi_{Y}}\big)\big(\Bra{\Psi_{X}}+\Bra{\Psi_{Y}}\big)\Ket{\Psi_{Y}}\\
							  & - \Ket{\Psi_{Y}}\\
							  =& \left(1 - e^{i\phi}\right)\Braket{\Psi_{Y}|\Psi_{Y}}\big(\Ket{\Psi_{X}}+\Ket{\Psi_{Y}}\big) - \Ket{\Psi_{Y}}\\
							  =& \left(1 - e^{i\phi}\right)2^{r+k_l+t-n}\big(\Ket{\Psi_{X}}+\Ket{\Psi_{Y}}\big) - \Ket{\Psi_{Y}}\\
							  =& (1-e^{i\phi})2^{r+k_l+t-n}\Ket{\Psi_{X}} 
							  \\&-\left((1-e^{i\phi})(1-2^{r+k_l+t-n})+e^{i\phi} \right)\Ket{\Psi_{Y}}.
\end{split}						  
\end{equation}

By making sure the resulting superposition $\mathcal{Q}(\Ket{\Psi_{X}}+\Ket{\Psi_{Y}})$ has inner product zero with $\Ket{\Psi_{Y}}$, then based on Eq. (\ref{thm2eq1}) and Eq. (\ref{thm2eq2}), we can obtain the following equation.
\begin{equation}\label{eq:qaa}
\begin{split}
	&e^{i\varphi}(1-e^{i\phi})(1-2^{r+k_l+t-n})\\
 =&(1-e^{i\phi})(1-2^{r+k_l+t-n})+e^{i\phi}.
\end{split}
\end{equation}

Denote by 
\begin{equation}
b=1-2^{r+k_l+t-n}.
\end{equation}

Then according to Eq. (\ref{eq:qaa}), we have
\begin{equation}
\begin{split}
	b =& e^{-i\varphi}\left(b + \frac{1}{e^{-i\phi} - 1}\right)\\
	  =& e^{-i\varphi}\left(b + \frac{1}{\cos\phi - 1 - i\sin\phi}\right)\\
	  =& e^{-i\varphi}\left(b + \frac{\cos\phi - 1}{\left(\cos\phi - 1\right)^2 + \sin^2\phi}\right. \\
&\left.+ i\frac{\sin\phi}{\left(\cos\phi - 1\right)^2 + \sin^2\phi}\right)\\
	  =& e^{-i\varphi}\left(b - \frac{1}{2} + i\frac{\sin\phi}{2 - 2\cos\phi}\right).\label{eq:varphi}
\end{split}						  
\end{equation}

Taking the square of $b$, we can further have
\begin{equation}\label{eq:phi0}
	b^2 = \left(b - \frac{1}{2}\right)^2 + \frac{\sin^2\phi}{4\left(1 - \cos\phi\right)^2}.
\end{equation}

Arrange Eq. (\ref{eq:phi0}) to get
\begin{equation}\label{eq:phi}
\begin{split}
	4b - 1 &= \frac{\sin^2\phi}{\left(1 - \cos\phi\right)^2}\\
 &= \cot^2\frac{\phi}{2}.
\end{split} 
\end{equation}

From Eq. (\ref{eq:phi}), we can obtain
\begin{equation}
\begin{split}
	\phi  &= 2\arctan{\left(\sqrt{\frac{1}{4b-1}}\right)}\\
	      &= 2\arctan{\left(\sqrt{\frac{2^{n-t-r-k_l}}{3\cdot 2^{n-t-r-k_l}-4}}\right)}.\label{eq:phi_final}
\end{split}      
\end{equation}

Since $b$ is a real number,  Eq. (\ref{eq:varphi}) also needs to be a real number, and we can further obtain
\begin{equation}
\begin{split}
	\varphi  &= \arccos{\left(\frac{b-\frac{1}{2}}{b}\right)}\\
	         &= \arccos{\left(\frac{2^{n-t-r-k_l-1}-1}{2^{n-t-r-k_l}-1}\right)}.\label{eq:varphi_final}
\end{split}     
\end{equation}
\end{proof}


\begin{thebibliography}{99}

\bibitem{nielsen_quantum_2010}M.A. Nielsen and I.L. Chuang, {\it Quantum Computation
and Quantum Information}, 10th ed. (Cambridge University Press, Cambridge, 2010).

\bibitem{shor_polynomial-time_1997}P.W. Shor, SIAM J. Comput. \textbf{26}, 1484 (1997). 

\bibitem{grover_fast_1996}L.K. Grover, in {\it Proceedings of the Twenty-Eighth Annual
ACM Symposium on Theory of Computing} (ACM Press, Philadelphia, 1996), pp. 212--219.

\bibitem{HHL_2009}A.W. Harrow, A. Hassidim, and S. Lloyd, Phys. Rev. Lett. \textbf{103}, 150502 (2009).

\bibitem{preskill_quantum_2018}J. Preskill, Quantum \textbf{2}, 79 (2018).

\bibitem{goos_distributed_2003}H. Buhrman and H. Röhrig, in {\it Mathematical Foundations
of Computer Science 2003}, edited by G. Goos, J. Hartmanis, J. van Leeuwen, B. Rovan, and P. Vojtáš,  Lecture Notes in Computer Science Vol. 2747 (Springer, Berlin, 2003), pp. 1--20.


\bibitem{beals_efficient_2013}R. Beals, S. Brierley, O. Gray, A.W. Harrow, S. Kutin, N. Linden, D. Shepherd, and M. Stather, Proc. R. Soc. A \textbf{469}, 20120686 (2013).

\bibitem{Qiu2017DQC}K. Li, D.W. Qiu, L.Z. Li, S.G. Zheng, and Z.B. Rong, Inf. Process. Lett. \textbf{120}, 23 (2017).



\bibitem{caleffi_quantum_2018}M. Caleffi, A. S. Cacciapuoti, and G. Bianchi, in {\it Proceedings of the 5th ACM International Conference on Nanoscale Computing and Communication} (Association for Computing Machinery, New York, 2018), p. 1.


\bibitem{avron_quantum_2021}J. Avron, O. Casper, and I. Rozen, Phys. Rev. A \textbf{104},  052404 (2021).


\bibitem{Qiu22}  D.W. Qiu and L. Luo, L.G. Xiao, arXiv:2204.10487.


\bibitem{Tan2022DQCSimon}J.W. Tan, L.G. Xiao, D.W. Qiu, L. Luo, and P. Mateus, Distributed quantum algorithm for Simon's problem, Phys. Rev.  A \textbf{106}, 032417 (2022).

\bibitem{Xiao2023DQAShor}  L.G. Xiao, D.W. Qiu, L. Luo, and P. Mateus, Quantum Inf. Comput. \textbf{23}, 1\&2 (2023).


\bibitem{Hao2023DDJ} H. Li, D.W. Qiu, and L. Luo, arXiv:2303.10663.


\bibitem{Xiao2023DQAkShor} L.G. Xiao, D.W. Qiu, L. Luo, and P. Mateus, arXiv:2304.12100.

\bibitem{simon_power_1997}D. R. Simon, SIAM J. Comput. \textbf{26}, 1474 (1997).

\bibitem{cai_optimal_2018}G.Y. Cai and D.W. Qiu, J. Comput. Syst. Sci. \textbf{97}, 83 (2018).

\bibitem{kaye_introduction_2007}P. Kaye, R. Laflamme, and M. Mosca, {\it An Introduction
to Quantum Computing} (Oxford University Press, Oxford, 2007).

\bibitem{BHMT02} G. Brassard, P. Hoyer, M. Mosca, and  A. Tapp,  Quantum amplitude amplification and estimation, AMS Contemporary Mathematics \textbf{305} (2002).


\bibitem{WuGSP} Z.G. Wu, D.W. Qiu, J.W. Tan, H. Li, and G.Y. Cai, Theor. Comput. Sci. \textbf{924} (2022).


\bibitem{KunGSP} Z.K. Ye, Y.Q. Huang, L.Z. Li, and Y.Y. Wang, Inf. Comput. \textbf{281}, 104790 (2021).

\end{thebibliography}
\end{document}